\newtheorem{theorem}{Theorem}[section]
\theoremstyle{definition}
\theoremstyle{remark}
\newtheorem{remark}[theorem]{Remark}
\numberwithin{equation}{section}
\def\a{{\alpha}}
\def\g{{\gamma}}
\def\e{{\epsilon}}
\def\k{{\kappa}}
\def\s{{\sigma}}
\def\f{{{\bf\Phi}}}
\def\m{{\mu}}
\def\p{{{\bf\Psi}}}
\def\ttt{{\theta}}
\def\ee{{\frac{1}{\epsilon}}}
\def\mm{{\frac{1}{\mu}}}
\def\ddd{{{\bf D}}}
\def\bbb{{{\bf B}}}
\def\nnn{{{\bf n}}}
\def\aaa{{{\bf A}}}
\def\nnnn{{{\bf N}}}
\def\G{{\Gamma}}
\def\O{{\Omega}}
\def\D{{\Delta}}
\def\ggg{{{\bf G}}}
\def\bb{{{\mathcal B}}}
\def\dd{{{\mathcal D}}}
\def\kk{{{\mathcal K}}}
\def\tt{{{\mathcal T}}}
\def\ggg{{{\bf G}}}
\def\vvv{{{\bf v}}}
\def\xxx{{{\bf x}}}
\def\nnn{{{\bf n}}}
\def\R{{{\bf R}^1}}
\def\RR{{{\bf R}^2}}
\def\RRR{{{\bf R}^3}}
\def\cr{{{\bf curl}}}
\def\9{{\ \hbox{in}\ \O}}
\def\1{{\ \hbox{on}\ \G_1}}
\def\2{{\ \hbox{on}\ \G_2}}
\def\3{{\ \hbox{on}\ \G_3}}
\def\0{{\ \hbox{on}\ \G}}
\def\pa{{\partial}}
\def\pp{{\parallel}}
\begin{document}

\title[A nonlocal formulation for the problem of microwave heating]{A nonlocal formulation for the problem of microwave heating of material with temperature dependent conductivity}
\author{Giovanni Cimatti}
\address{Department of Mathematics, Largo Bruno
  Pontecorvo 5, 56127 Pisa Italy}
\email{cimatti@dm.unipi.it}


\subjclass[2010]{78A25, 83C05}



\keywords{Maxwell equations, electromagnetic heating, Galerkin method, Schauder fixed point theorem}

\begin{abstract}
Microwave electromagnetic heating are widely used in many industrial processes. The mathematics involved is based on the Maxwell's equations coupled with the heat equation. The thermal conductivity is strongly dependent on the temperature, itself an unknown of the system of P.D.E. We propose here a model which simplifies this coupling using a nonlocal term as the source of heating. We prove that the corresponding mathematical initial-boundary value problem has solutions using the Schauder's fixed point theorem.
\end{abstract}

\maketitle

\section{Introduction}
Microwave electromagnetic heating is increasingly used in industrial applications. The existing treatments deals mainly with one-dimensional models (see \cite{PS}, \cite{MS}, \cite{HD} and reference therein). In this paper the coupling between the Maxwell's equations and the heat equation is modeled in a simplified form, since the source term in the heat equation is taken to be

\begin{equation*}
E(t)= \frac{1}{2}\int_\O \Bigl[\ee\ddd^2(t)+\mm\bbb^2(t)\Bigl]dx
\end{equation*}
which represents the total electromagnetic energy This is justified by the order of magnitude of the parameters involved, moreover this approach simplifies the mathematical treatment and is probably also useful for the numerical treatment.

 Let $\O$ be an open and bounded subset of $\RRR$ (or of $\RR$) representing a conductor of both heat and electricity. The boundary of $\O$, denoted $\G$, is supposed to be of class $C^2$. The electrical conductivity $\s$ depends from the temperature $\ttt$ and the position, according to a given law: $\s=\s(\ttt,\xxx)$. We wish to determine the electric induction $\ddd(\xxx,t)$, the magnetic induction $\bbb(\xxx,t)$ and the temperature $\ttt(\xxx,t)$ in $\O$ with the following initial boundary-value problem

\begin{equation}
\frac{\pa\ddd}{\pa t}+\s(\ttt,\xxx)\ee\ddd-\mm\ \cr\ \bbb=\ggg\quad \hbox{in}\quad Q_{T}:=\O\times(0,T)\quad 
\label{1_2}
\end{equation}

\begin{equation}
\frac{\pa\bbb}{\pa t}+\ee\ \cr\ \ddd=0\quad\hbox{in}\ Q_{T}
\label{2_2}
\end{equation}

\begin{equation}
\frac{\pa\ttt}{\pa t}-\kappa\D\ttt=E(t) \quad \hbox{in}\quad Q_{T}
\label{3_2}
\end{equation}

\begin{equation}
\ddd(\xxx,0)=\ddd_0(\xxx)\quad \xxx\in\O
\label{4_2}
\end{equation}

\begin{equation}
\bbb(\xxx,0)=\bbb_0(\xxx)\quad \xxx\in\O
\label{5_2}
\end{equation}

\begin{equation}
\ttt(\xxx,0)=\ttt_0(\xxx)\quad \xxx\in\O
\label{6_2}
\end{equation}

\begin{equation}
\ttt=0\quad \hbox{on}\ \G\times(0,T)
\label{7_2}
\end{equation}

\begin{equation}
\nnn\wedge\ddd=0\quad \hbox{on}\ \G\times(0,T),
\label{8_2}
\end{equation}

where $\ddd_0(\xxx)$, $\bbb_0(\xxx)$ and $\theta_0(\xxx)$ are given initial data. The dielectric constant $\e$, the magnetic permeability $\m$ and the diffusivity $\kappa$ are all assumed to be constants. $\nnn$ is the unit vector normal to $\G$. The term in the R.H.S of equation (\ref{1_2}) reflects a possible generation of charge in $\O$ as e.g. for the presence of radio-active material in $\O$. \footnote{This model apply  to micro-wave heating not to be confused with induction heating which is based on the quasi-stationary Maxwell equations.}

In this paper we prove that, under suitable assumptions on the data, the problem (\ref{1_2})-(\ref{8_2}) has solutions. The main analytical tool will be the Schauder's theorem used to make a fixed point on the total electromagnetic energy $E(t)$.

\section{Weak formulation of the problem}

In addition to the usual Lebesgues and Sobolev spaces use will be made of the following spaces (see for more details \cite{DL}, \cite {MC}) and \cite{RDL})

\begin{equation}
H(\cr;\O)=\big\{\p\in L^2(\O)^3,\ \cr\ \p\in L^2(\O)^3\big\}
\label{1_5}
\end{equation} 

\begin{equation}
H_0(\cr;\O)=\big\{\f\in L^2(\O)^3,\ \cr\ \f\in L^2(\O)^3,\ \nnn\wedge\p_{|\G}=0\big\}.
\label{2_5}
\end{equation} 

We recall that the mapping $\vvv\to(\nnn\wedge\vvv)_{\G}$ from $C^1(\bar\O)^3$ into $C^1(\G)^3$ can be extended by continuity to a mapping, again denoted $\vvv\to(\nnn\wedge\vvv)_{\G}$, from $H(\cr;\O)$ into $\big(H^{-1/2}(\G)\big)^3$ . We recall the Green formula \footnote{Here and hereafter $(\aaa,\bbb)$ denotes the scalar product in $L^2(\O)^3$ and $\pp\ \pp$ the corresponding norm. Moreover, we use the notation $\int_\O \aaa(\xxx,t)\bullet\bbb(\xxx,t)dx=(\aaa(t)\bullet\bbb(t))$.}

\begin{equation}
\bigl(\cr\ \aaa,\bbb\bigl)-\bigl(\aaa,\cr\ \bbb\bigl)=\int_\G\nnn\wedge\aaa\bullet\bbb d\ \G
\label{3_5}
\end{equation}
for all $\aaa$ and $\bbb \ \in H(\cr,\O)$. If, in particular, either $\aaa$ or $\bbb\in H_0(\cr;\O)$ then

\begin{equation}
\bigl(\cr\ \aaa,\bbb\bigl)=\bigl(\aaa,\cr\ \bbb\bigl).
\label{4_5}
\end{equation}

We motivate here our weak formulation of problem (\ref{1_2})-(\ref{8_2}) . Let $T>0$ and $(\ddd,\bbb,\ttt)$ be a classical solution. Let

\begin{equation}
\f(\xxx,t)\in L^2(0,T;H_0(\cr;\O)),\ \frac{\pa\f}{\pa t}\in L^2(0,T;L^2(\O)^3),\ \f(\xxx,T)=0.
\label{3_7}
\end{equation}

Multiplying (\ref{1_2}) by $\f$ and integrating by parts we have

\begin{equation}
\int_0^T\Bigl[-\Bigl(\ddd,\frac{\pa\f}{\pa t}\Bigr)+\ee\Bigr(\s(\ttt,\xxx)\ddd,\f\Bigl)-\mm\Bigl(\bbb,\cr\ \f \Bigr)\Bigl]dt=\Bigl(\ddd_0(\xxx),\f(\xxx,0)\Bigl)-\int_0^T\Bigl(\ggg,\f\Bigl)dt,
\label{1_9}
\end{equation}

for all $\f(\xxx,t)$ satisfying (\ref{3_7}). Let  
\begin{equation}
\p(\xxx,t)\in L^2(0,T;H(\cr;\O)),\ \frac{\pa\p}{\pa t}\in L^2(0,T;L^2(\O)^3),\ \p(\xxx,T)=0.
\label{1_10}
\end{equation}

Multiplying (\ref{2_2}) by $\p(\xxx,t)$ and integrating by parts with respect to $t$ we obtain

\begin{equation}
\int_0^T\bigg[\Bigl(-\bbb,\frac{\pa\p}{\pa t}\Bigl)+\ee\Bigl(\ddd,\cr\ \p\Bigl)\Bigl]dt=\Bigl(\bbb_0(\xxx),\p(\xxx,0)\Bigl)
\label{2_10}
\end{equation}
for all $\p(\xxx,t)$ satisfying (\ref{1_10}). Therefore, we arrive to the following problem. Let

\begin{equation}
\ggg\in L^2(0,T;L^3(\O)^3),\quad \frac{\pa\ggg}{\pa t}\in  L^2(0,T;L^3(\O)^3)
\label{a_11}
\end{equation}

\begin{equation}
\ddd_0(\xxx)\in H_0(\cr;\O),\quad \bbb_0(\xxx)\in H(\cr;\O)
\label{b_11}
\end{equation}

\begin{equation}
\ttt_0(\xxx)\in H_0^1(\O).
\label{c_11}
\end{equation}

Assume
\begin{equation}
\s(\xi,\xxx)\in C^1(\bar Q_T),\quad |\s(\xi,\xxx)|\leq \s_0,\quad\Bigl|\frac{\pa\s}{\pa\xi}\bigg|\leq\s_1.
\label{c_111}
\end{equation}

We wish to find 

\begin{equation}
\ddd\in L^\infty(0,T;H_0(\cr;\O)),\quad \frac{\pa\ddd}{\pa t}\in L^\infty(0,T;L^2(\O)^3)
\label{d_11}
\end{equation}

\begin{equation}
\bbb\in L^\infty(0,T;H(\cr;\O)),\quad \frac{\pa\bbb}{\pa t}\in L^\infty(0,T;L^2(\O)^3)
\label{e_11}
\end{equation}

\begin{equation}
\ttt\in L^\infty(0,T;H_0^1(\O))
\label{f_11}
\end{equation}

such that (\ref{1_9}) and (\ref{2_10}) hold and

\begin{equation}
\ttt_t=\k\D\ttt+\frac{1}{2}\int_\O\Bigl[\ee\ddd^2(t)+\mm\bbb^2(t)\Bigl]dx.
\label{h_11}
\end{equation}

\section{The linear problem}
\noindent To prove that problem (\ref{1_9})-(\ref{h_11}) has at least one solution we study in this Section a linear problem for the Maxwell' equations in which the electrical conductivity $s$ is assigned as a function of $\xxx$ and $t$. 

\begin{theorem}
 Let

\begin{equation}
\ggg\in L^2(0,T;L^2(\O)^3),\quad \frac{\pa\ggg}{\pa t}\in  L^2(0,T;L^2(\O)^3)
\label{aa_11}
\end{equation}

\begin{equation}
\ddd_0(\xxx)\in H_0(\cr;\O),\quad \bbb_0(\xxx)\in H(\cr;\O).
\label{bb_11}
\end{equation}

Assume $s(\xxx,t)\in C^1(\bar Q_T)$ to satisfy
\begin{equation}
 |s(\xxx,t)|\leq\ s_0
\label{cc_11}
\end{equation}

\begin{equation}
\Bigl |\frac{\pa s}{\pa t}(\xxx,t)\Bigl|\leq\ s_1.
\label{ccc_11}
\end{equation}

There exists a unique weak solution to the initial-boundary problem

\begin{equation}
\frac{\pa\ddd}{\pa t}+s(\xxx,t)\ee\ddd-\mm\cr\ \bbb=\ggg
\label{5_14}
\end{equation}

\begin{equation}
\frac{\pa\bbb}{\pa t}+\ee\ \cr\ \ddd=0
\label{6_14}
\end{equation}

\begin{equation}
\ddd\in L^\infty(0,T;H_0(\cr;\O)),\quad \frac{\pa\ddd}{\pa t}\in L^\infty(0,T;L^2(\O)^3)
\label{ddd_11}
\end{equation}

\begin{equation}
\bbb\in L^\infty(0,T;H(\cr;\O)),\quad \frac{\pa\bbb}{\pa t}\in L^\infty(0,T;L^2(\O)^3)
\label{eee_11}
\end{equation}

\begin{equation}
\ddd(\xxx,0)=\ddd_0(\xxx)
\label{7_14}
\end{equation}

\begin{equation}
\bbb(\xxx,0)=\bbb_0(\xxx).
\label{8_14}
\end{equation}
\end{theorem}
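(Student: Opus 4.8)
The plan is to prove existence by the Faedo--Galerkin method and uniqueness by an energy estimate on the difference of two solutions. First I would fix a basis for the problem: since $\ddd$ must live in $H_0(\cr;\O)$ and $\bbb$ in $H(\cr;\O)$, I would choose a sequence $\{\w_j\}$ of smooth vector fields that is total in $H_0(\cr;\O)$ (for instance eigenfunctions of the $\cr\cr$ operator with the boundary condition $\nnn\wedge\w=0$), and note that $\{\cr\ \w_j\}$ together with the gradients/harmonic fields can be used to approximate elements of $H(\cr;\O)$. Actually, the cleanest route is to use a single family: let $\{\w_j\}\subset H_0(\cr;\O)$ be an orthonormal basis of $L^2(\O)^3$ consisting of sufficiently regular fields, seek $\ddd_m(t)=\sum_{j=1}^m d_j^m(t)\w_j$ and $\bbb_m(t)=\sum_{j=1}^m b_j^m(t)\cr\ \w_j$ (or a second basis for $H(\cr;\O)$), and impose the Galerkin equations obtained by testing \eqref{5_14} against $\w_k$ and \eqref{6_14} against $\cr\ \w_k$. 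This yields a linear ODE system with $C^1$ (in $t$) coefficients via \eqref{cc_11}, so Cauchy--Lipschitz gives local, hence global (by the a priori bounds below), solutions $d_j^m,b_j^m$.

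Second, I would derive the a priori estimates. Testing \eqref{5_14} by $\ddd_m$ and \eqref{6_14} by $\mm\ee^{-1}\bbb_m$ — more precisely combining so the curl terms cancel through \eqref{4_5} — gives $\frac{d}{dt}\bigl(\pp\ddd_m\pp^2+\mm\ee^{-1}\pp\bbb_m\pp^2\bigr)+2\ee(s\ddd_m,\ddd_m)=2(\ggg,\ddd_m)$, and since $s\geq 0$ is not assumed I would instead use $|s|\leq s_0$ with Gronwall to bound $\sup_{[0,T]}\bigl(\pp\ddd_m\pp^2+\pp\bbb_m\pp^2\bigr)$ by the data. To control the curls and the time derivatives I would differentiate the Galerkin system in $t$: set $\ddd_m'=\pa_t\ddd_m$, $\bbb_m'=\pa_t\bbb_m$, differentiate \eqref{5_14}--\eqref{6_14}, test again and use \eqref{ccc_11} (the bound $|\pa_t s|\leq s_1$) together with $\pa_t\ggg\in L^2(0,T;L^2)$ from \eqref{aa_11}; this needs the bound on $\ddd_m'(0),\bbb_m'(0)$ in $L^2$, which comes from evaluating \eqref{5_14}--\eqref{6_14} at $t=0$ using $\ddd_0\in H_0(\cr;\O)$, $\bbb_0\in H(\cr;\O)$ so that $\cr\ \bbb_0,\cr\ \ddd_0\in L^2$. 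Then Gronwall gives $\sup_{[0,T]}\bigl(\pp\ddd_m'\pp^2+\pp\bbb_m'\pp^2\bigr)\leq C$, and reading \eqref{5_14} and \eqref{6_14} backwards bounds $\pp\cr\ \bbb_m\pp$ and $\pp\cr\ \ddd_m\pp$ uniformly, giving exactly the regularity classes \eqref{ddd_11}--\eqref{eee_11}.

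Third, I would pass to the limit: by the uniform bounds, extract a subsequence with $\ddd_m\rightharpoonup^* \ddd$ in $L^\infty(0,T;H_0(\cr;\O))$, $\bbb_m\rightharpoonup^*\bbb$ in $L^\infty(0,T;H(\cr;\O))$, and $\pa_t\ddd_m\rightharpoonup^*\pa_t\ddd$, $\pa_t\bbb_m\rightharpoonup^*\pa_t\bbb$ in $L^\infty(0,T;L^2(\O)^3)$. Linearity of \eqref{5_14}--\eqref{6_14} makes the limit passage routine — every term is linear in the unknowns, and the coefficient $s$ is fixed and continuous, so weak-$*$ convergence suffices to pass to the limit in the Galerkin identities tested against finite combinations $\sum_{j\leq N}\a_j(t)\w_j$, then density gives \eqref{1_9}--\eqref{2_10}; the initial conditions \eqref{7_14}--\eqref{8_14} are recovered because $\ddd_m(0)\to\ddd_0$, $\bbb_m(0)\to\bbb_0$ (choosing the Galerkin initial data as the projections) and $\ddd,\bbb\in C([0,T];L^2)$ by the time-derivative bound. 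For uniqueness, I would take two solutions, subtract, test the difference equations against $(\ddd_1-\ddd_2)$ and $\mm\ee^{-1}(\bbb_1-\bbb_2)$, use $|s|\leq s_0$ and Gronwall with zero initial data to conclude the difference vanishes.

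I expect the main obstacle to be the higher-order (differentiated-in-time) energy estimate: one must justify differentiating the finite-dimensional system, correctly identify $\ddd_m'(0),\bbb_m'(0)$ and bound them using only $\ddd_0\in H_0(\cr;\O)$, $\bbb_0\in H(\cr;\O)$ and $\ggg(0)\in L^2$, and carefully absorb the term $\ee\,\pa_t s\,\ddd_m$ (estimated by $\ee s_1\pp\ddd_m\pp\,\pp\ddd_m'\pp$) into the Gronwall inequality — this is where hypotheses \eqref{ccc_11} and \eqref{aa_11} are genuinely used. The curl boundary term handling via the Green formula \eqref{4_5} also needs the Galerkin fields to lie in $H_0(\cr;\O)$, which is why the basis choice matters; everything else is standard linear parabolic-hyperbolic Galerkin machinery.
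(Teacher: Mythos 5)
Your proposal is essentially the paper's own proof: a Faedo--Galerkin scheme with bases spanning $H_0(\cr;\O)$ and $H(\cr;\O)$, a basic energy estimate with the curl terms cancelling via the Green formula \eqref{4_5} and Gronwall (using only $|s|\leq s_0$, no sign condition), a time-differentiated Galerkin estimate whose key ingredients are the bound on $\ddd_m'(0),\bbb_m'(0)$ obtained by evaluating the system at $t=0$ together with \eqref{ccc_11} and \eqref{aa_11}, weak-$*$ passage to the limit against test functions built from the basis, and uniqueness by the same energy identity applied to the difference. The only small imprecision is that the uniform curl bounds cannot be ``read backwards'' from the Galerkin equations themselves (which hold only in projected form); the memberships \eqref{ddd_11}--\eqref{eee_11} are instead recovered from the limit equations holding in the sense of distributions, a point on which the paper is equally brief.
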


\begin{proof}\footnote{The semigroup theory, as e.g. in \cite{E}, is not directly applicable since, in the present case, the Maxwell operator depends explicitly on $t$.} \footnote{The proof of this theorem is modeled after theorems 4.1 and 5.1 of \cite{DL}. However, these theorems are no directly applicable to the present case since here $s$ depends on $t$.}

We apply the Galerkin method. Let $\f_j(\xxx)\in C^1(\bar\O)^3$, $j\in\nnnn$ be such that

\begin{equation}
\f_j\wedge\nnn=0\quad\hbox{on}\ \G\quad\hbox{for all}\ j
\label{i_15}
\end{equation}

\begin{equation}
\{\f_1,....,\f_m\}\quad\hbox{are linearly independent for every}\ m.
\label{ii_15}
\end{equation}

\begin{equation}
\hbox{The finite sums}\ \sum_{j=1}^m\xi_j\f_j(\xxx)\quad \xi\in\R\ \hbox{are dense in}\ H_0(\cr;\O).\footnote{The ``base'' $\f_1,\f_2,...\f_m,...$ do exist and the same can be said for $\p_j.$ See \cite{DL} page 147. }
\label{iii_15}
\end{equation}

 Let $\p_j(\xxx)\in C^1(\bar\O)^3$, $j\in\nnnn$ be such that

\begin{equation}
\{\p_1,....,\p_m\}\quad\hbox{are linearly independent for every }\ m. 
\label{iv_15}
\end{equation}

\begin{equation}
\hbox{The finite sums}\ \sum_{j=1}^m\xi_j\p_j(\xxx)\quad \xi\in\R\ \hbox{are dense in}\ H(\cr;\O).
\label{v_15}
\end{equation}

By (\ref{v_15}) and (\ref{iii_15}) we can find two sequences $\ddd_{0m}(\xxx)$, $\bbb_{0m}(\xxx)$ of the form

\begin{equation*}
\ddd_{0m}(\xxx)=\sum_{j=1}^m\xi_j\f_j(\xxx),\quad \bbb_{0m}(\xxx)=\sum_{j=1}^m\eta_j\p_j(\xxx)
\end{equation*}

such that for $m\to\infty$

\begin{equation}
\label {1}
\ddd_{0m}(\xxx)\rightarrow \ddd_0(\xxx)\quad \hbox{in}\ H_0(\cr;\O),\quad \bbb_{0m}(\xxx)\rightarrow \bbb_0(\xxx)\quad \hbox{in}\ H(\cr;\O).
\end{equation}

 We look for approximate solutions of the form

\begin{equation}
\ddd_m(\xxx,t)=\sum_{j=1}^m d_j(t)\f_j(\xxx),\ \bbb_m(\xxx,t)=\sum_{j=1}^m b_j(t)\p_j(\xxx),
\label{1_16}
\end{equation}
where $d_j(t)$, $b_j(t)$ are solutions of the following system of $2m$ ordinary differential equations

\begin{equation}
\label{3_19}
\big(\ddd_m'(t),\f_j\big) =\mm\big(\cr\ \bbb_m(t),\f_j\big)-\ee\big( s(\xxx,t)\ddd_m(t),\f_j\big)+\big(\ggg(t),\f_j\big)
\end{equation}

\begin{equation}
\label{1_20}
\big(\bbb_m'(t),\p_j\big)=-\ee\big(\cr\ \ddd_m(t),\p_j\big)
\end{equation}
supplemented with the initial conditions

\begin{equation}
\label{A}
\ddd_m(0)=\ddd_{0m},\quad \bbb_m(0)=\bbb_{0m}.
\end{equation}

By the theorem of existence and uniqueness for systems of ordinary differential equations  the linear initial value problem (\ref{3_19})- (\ref{A}) has one and only one solution $(d_j(t),b_j(t))$ defined in $[0,T]$.

 Let us multiply (\ref{3_19}) by $d_j(t)$ and (\ref{1_20}) by $b_j(t)$ and sum over $j$ from $1$ to $m$. We find respectively

\begin{equation}
\bigl(\ddd_m'(t),\ddd(t)\bigl)=\mm\bigl(\cr\ \bbb_m(t),\ddd_m(t)\bigl)-\ee\bigl(s(\xxx,t)\ddd_m(t),\ddd_m(t)\bigl)+\bigl(\ggg(t),\ddd_m(t)\bigl)
\label{2_20a}
\end{equation}

and

\begin{equation}
\bigl(\bbb_m'(t),\bbb_m(t)\bigl)=-\ee\bigl(\cr\ \ddd_m(t),\bbb_m(t)\bigl).
\label{3_20a}
\end{equation}

Multiplying (\ref{2_20a}) by $\ee$ and (\ref{3_20a}) by $\mm$, adding the resulting equations  and recalling (\ref{4_5}) we obtain

\begin{equation}
\frac{1}{2}\frac{d}{dt}\Bigl(\ee\pp\ddd_m(t)\pp^2+\mm\pp\bbb_m(t)\pp^2\Bigl)=-\ee\Bigl(s(t)\ddd_m(t),\ddd_m(t)\Bigl)+\Bigl(\ggg(t),\ddd_m(t)\Bigl).
\label{1_21}
\end{equation}

Recalling (\ref{cc_11}) we can estimate the right hand side of (\ref{1_21}) and arrive at 

\begin{equation}
\frac{1}{2}\frac{d}{dt}\Bigl(\ee\pp\ddd_m(t)\pp^2+\mm\pp\bbb_m(t)\pp^2\Bigl)\leq C_1+C_2\Bigl(\ee\pp\ddd_m(t)\pp^2+\mm\pp\bbb_m(t)\pp^2\Bigl),
\label{2_21}
\end{equation}

where the constants $C_1$ and $C_2$ depend only on the data. Moreover, by (\ref{1}) and (\ref{A}) we have

\begin{equation}
\ee\pp\ddd_m(0)\pp^2+\mm\pp\bbb_m(0)\pp^2\leq C,
\label{3_21}
\end{equation}
where the constant $C$ depends only on the data. Using the Gronwall's lemma we conclude that

\begin{equation}
\ddd_m\ \hbox{and}\ \bbb_m\ \hbox{are bounded in}\ L^\infty(0,T;L^2(\O)^3)\ \hbox{by a constant}\  M,
\label{4_21}
\end{equation}
with $M$ not depending on $m$. Therefore, we can extract two subsequences $\ddd_\mu$ and $\bbb_\mu$ such that

\begin{equation}
\ddd_\mu\to\ddd,\quad \bbb_\mu\to\bbb\quad\hbox{weakly* in}\ L^\infty(0,T;L^2(\O)^3).
\label{1_22}
\end{equation}

Let $\xi_j(t)\in C^1([0,T]),\ \xi_j(T)=0,\ \eta_j(t)\in C^1([0,T]),\ \eta_j(T)=0$ and define

\begin{equation}
{\bf V}(\xxx,t)=\sum_{j=1}^m \xi_j(t)\f_j(\xxx)
\label{3_22}
\end{equation}

\begin{equation}
 {\bf W}(\xxx,t)=\sum_{j=1}^m \eta_j(t)\p_j(\xxx).
\label{3_22b}
\end{equation}
Let us choose in (\ref{3_19}) and (\ref{1_20}) $m=\mu$. Multiply (\ref{3_19}) by $\xi_j(t)$ and (\ref{1_20} by $\eta_j(t)$. Summing over $j$ from 1 to $m$, integrating by parts and using  (\ref{4_5}) we obtain

\begin{equation}
\int_0^T-\Bigl(\ddd_\mu,\frac{\pa{\bf V}}{\pa t}\Bigl) dt=\int_0^T\mm\Bigl( \bbb_\mu,\cr\ {\bf V}\Bigl)dt-\int_0^T\ee\Bigl(s(t)\ddd_\mu,{\bf V}\Bigl)dt+
\label{1_23}
\end{equation}

\begin{equation*}
\int_0^T\bigl(\ggg,{\bf V}\bigl)dt+\bigl(\ddd_0(\xxx),{\bf V(\xxx,0)}\bigl)
\end{equation*}

and

\begin{equation}
-\int_0^T\Bigl(\bbb_\mu,\frac{\pa{\bf W}}{\pa t}\Bigl)dt=-\int_0^T\ee\Bigl(\ddd_\mu,\cr\ {\bf W}\Bigl)dt+\Bigl(\bbb_0(\xxx),{\bf W(\xxx,0)}\Bigl).
\label{2_23}
\end{equation}
By (\ref{1_22}) we can pass to the limit in (\ref{1_23}) and (\ref{2_23}) and obtain

\begin{equation}
\int_0^T-\Bigl(\ddd,\frac{\pa{\bf V}}{dt}\Bigl) dt=\int_0^T\mm\Bigl( \bbb,\cr\ {\bf V}\Bigl)dt-\int_0^T\ee\big(s(t)\ddd,{\bf V}\big)dt+
\label{1_24}
\end{equation}

\begin{equation*}
\int_0^T\bigl(\ggg,{\bf V}\bigl)dt+\bigl(\ddd_0(\xxx),{\bf V(\xxx,0)}\bigl)
\end{equation*}

and

\begin{equation}
-\int_0^T\Bigl(\bbb,\frac{\pa{\bf W}}{dt}\Bigl)dt=-\int_0^T\ee\Bigl( \ddd,\cr\ {\bf W}\Bigl)dt+\big(\bbb_0(\xxx),{\bf W(\xxx,0)}\big)
\label{2_24}
\end{equation}
for all functions of the special form (\ref{3_22}) and (\ref{3_22b}). But, if $\f$ satisfies (\ref{3_7}) we can find a sequence ${\bf V}_k(\xxx,t)$ of the form (\ref{3_22}) such that

\begin{equation}
{\bf V}_k(\xxx,t)\to \f(\xxx,t)\quad \hbox{in}\ L^2(0,T;H_0(\cr;\O))
\label{2_25}
\end{equation}
                 
\begin{equation}
 \frac{\pa{\bf V}_k}{\pa t}\to \frac{\pa\f}{\pa t}\quad \hbox{in}\ L^2(0,T;L^2(\O)^3).
\label{3_25}
\end{equation}
Furthermore,  if  $\p$ satisfies (\ref{1_10}), we can find a sequence ${\bf W}_k(\xxx,t)$ of the form (\ref{3_22b}) such that

\begin{equation}
{\bf W}_k(\xxx,t)\to \p(\xxx,t)\quad \hbox{in}\ L^2(0,T;H(\cr;\O))
\label{4_25}
\end{equation}
                 
\begin{equation}
 \frac{\pa{\bf W}_k}{\pa t}\to \frac{\pa\p}{\pa t}\quad \hbox{in}\ L^2(0,T;L^2(\O)^3).
\label{5_25}
\end{equation}
With this choice of the test functions the equation (\ref{1_24}) becomes

\begin{equation}
\int_0^T-\Bigl(\ddd,\frac{\pa{\bf V_k}}{dt}\Bigl) dt=\int_0^T\mm\Bigl( \bbb,\cr\ {\bf V_k}\Bigl)dt-\int_0^T\ee\bigl(s(t)\ddd,{\bf V_k}\bigl)dt+
\label{1_24bis}
\end{equation}

\begin{equation*}
\int_0^T\bigl(\ggg,{\bf V_k}\bigl)dt+\bigl(\ddd_0(\xxx),{\bf V_k(\xxx,0)}\bigl).
\end{equation*}
In the same way, with the choice of the test functions ${\bf W_k}$, (\ref{2_24}) becomes

\begin{equation}
-\int_0^T\Bigl(\bbb,\frac{\pa{\bf W}_k}{dt}\Bigl)dt=\int_0^T\ee\Bigl(\ddd,\cr\ {\bf W}_k\Bigl)dt+\Bigl(\bbb_0(\xxx),{\bf W}_k(\xxx,0)\Bigl).
\label{2_24bis}
\end{equation}

In the limit for $k\to\infty$ from (\ref{1_24bis}) we obtain

\begin{equation}
\int_0^T\Bigl[-\Bigl(\ddd,\frac{\pa \f}{\pa t}\Bigl)-\mm\Bigl(\bbb,\cr\ \f\Bigl)+\ee\Bigl( s(\xxx,t)\ddd,\f\Bigl)\Bigl]dt=
\label{2_26}
\end{equation}

\begin{equation*}
\int_0^T\Bigl(\ggg,\f\Bigl)dt+\Bigl(\ddd_0(\xxx),\f(\xxx,0)\Bigl)
\end{equation*}
for every $\f(\xxx,t)$ satisfying (\ref{3_7}). Moreover, from (\ref{2_24bis}), again for $k\to\infty$, we have

\begin{equation}
\int_0^T\Bigl[-\Bigl(\bbb,\frac{\pa\p}{\pa t}\Bigl)-\ee\Bigl(\ddd,\cr \p\Bigl)\Bigl]dt=\Bigl(\bbb_0(\xxx),\p(\xxx,0)\Bigl)
\label{3_26}
\end{equation}
for every $\p(\xxx,t)$ satisfying (\ref{1_10}). Thus ($\ddd$,$\bbb$) is a weak solution to problem (\ref{5_14})-(\ref{8_14}).

 Up to now we did not use all the assumptions of regularity made on the data. If we do this we can obtain a more regular solution. Let us differentiate with respect to $t$ (\ref{3_19}) and (\ref{1_20}). We find

\begin{equation}
\label{1_28}
\Bigl(\ddd_m''(t),\f_j\Bigl)=\mm\Bigl(\cr\ \bbb_m'(t),\f_j\Bigl)-\ee\Bigl( s(\xxx,t)\ddd_m'(t),\f_j\Bigl)-\ee\Bigl(\frac{\pa s(\xxx,t)}{\pa t}\ddd_m(t),\f_j\Bigl)+\Big(\ggg'(t),\f_j\Bigl)
\end{equation}

\begin{equation}
\label{2_28}
\Bigl(\bbb_m''(t),\p_j\Bigl)=-\ee\Bigl(\cr\ \ddd_m'(t),\p_j\Bigl).
\end{equation}

Multiplying (\ref{1_28}) by $\ee d_j(t)$ and (\ref{2_28}) by $\mm b_j(t)$ and summing over $j$ we obtain respectively

\begin{equation}
\ee\frac{1}{2}\frac{d}{dt}\pp \ddd_m'(t)\pp^2=\ee\mm\Bigl(\cr\ \bbb_m'(t),\ddd_m'(t)\Bigl)-\frac{1}{\e^2}\Bigl(\frac{\pa s(\xxx,t)}{\pa t}\ddd_m'(t),\ddd_m(t)\Bigl)-
\label{1_30}
\end{equation}

\begin{equation*}
\frac{1}{\e^2}\Bigl( s(\xxx,t)\ddd_m'(t),\ddd_m'(t)\Bigl)+\ee\Bigl(\frac{\pa\ggg}{\pa t},\ddd_m'(t)\Bigl)
\end{equation*}

\begin{equation}
\mm\frac{1}{2}\frac{d}{dt}\pp\bbb_m'(t)\pp^2=-\mm\ee\Bigl(\cr\ \ddd_m'(t),\bbb_m'(t)\Bigl).
\label{2_30}
\end{equation}
Adding (\ref{1_30}) and \ref{2_30}) and using (\ref{4_5}) we have

\begin{equation}
\frac{1}{2}\frac{d}{dt}\Bigl(\ee\pp\ddd_m'(t)\pp^2+\mm\pp\bbb_m'(t)\pp^2\Bigl)=-\frac{1}{\e^2}\Bigl( s(\xxx,t)\ddd_m'(t),\ddd'_m(t)\Bigl)-\frac{1}{\e^2}\Bigl(\frac{\pa s(\xxx,t)}{\pa t}\ddd_m(t),\ddd_m'\Bigl)+\ee\Bigl(\frac{\pa\ggg}{\pa t},\ddd_m'(t)\Bigl).
\label{3_30}
\end{equation}

Moreover, from (\ref{3_19}) and \ref{1_20}) we have for $t=0$

\begin{equation}
\label{1_31}
\Bigl(\ddd_m'(0),\f_j \Bigl)=\mm\Big(\cr\ \bbb_m(0),\f_j\Bigl)-\ee\Bigl(s(\xxx,0)\ddd_m(0),\f_j\Bigl)+\big(\ggg(0),\f_j)
\end{equation}

\begin{equation}
\label{2_31}
\Bigl(\bbb_m'(0),\p_j\Bigl)=-\ee\Bigl(\cr\ \ddd_m(0),\p_j\Bigl).
\end{equation}

Multiplying (\ref{1_31}) by $\ee d'_j(0)$ and (\ref{2_31}) by $\mm b_j'(0)$ we have, summing over $j$ from $1$ to $m$,

\begin{equation}
\label{alpha}
\ee\pp\ddd_m'(0)\pp^2=\frac{1}{\e\mu}\Bigl(\cr\ \bbb_m(0),\ddd_m(0)\Bigl)-\frac{1}{\e^2}\Bigl(s(\xxx,0),\ddd_m(0),\ddd_m(0)\Bigl)+\ee\Bigl(\ggg(0),\ddd_m(0)\Bigl)
\end{equation}

\begin{equation}
\label{beta}
\mm\pp \bbb_m'(0)\pp^2=-\frac{1}{\e\mu}\Bigl(\cr\ \ddd_m(0),\bbb_m(0)\Bigl)
\end{equation}

Using (\ref{4_5}) we have, adding (\ref{alpha}) and (\ref{beta}), 

\begin{equation}
\label{gamma}
\ee\pp\ddd_m'(0)\pp^2+\mm\pp\bbb_m'(0)\pp^2=-\frac{1}{\e^2}\Bigl(s(\xxx,0),\ddd_m(0)\Bigl)+\ee\Bigl(\ggg(0),\ddd_m(0)\Bigl)\leq C\Bigl(\frac{s_0}{\e^2}+\ee\pp \ggg(0)\pp\Bigl)\pp\ddd_{0m}\pp.
\end{equation}

Since $\pp\ddd_{0m}\pp$ is bounded by (\ref{1}), we conclude, recalling (\ref{aa_11}) that

\begin{equation}
\label{delta}
\ee\pp\ddd_m'(0)\pp^2+\mm\pp\bbb_m'(0)\pp^2\leq M
\end{equation}

with $M$ not depending on $m$. Using (\ref{cc_11}), (\ref{ccc_11}) and the Cauchy-Schwartz inequality we can estimate the R.H.S of (\ref{3_30}) and arrive to the inequality

\begin{equation}
\frac{d}{dt}\Bigl(\ee\pp\ddd_m'(t)\pp^2+\mm\pp\bbb_m'(t)\pp^2\Bigl)\leq C_1 \Bigl(\ee\pp\ddd_m'(t)\pp^2+\mm\pp\bbb_m'(t)\pp^2\Bigl)+C_2.
\label{4_31}
\end{equation}

Recalling (\ref{delta}) and using the Gronwall's inequality we conclude that $\ddd_m'$ and $\bbb'_m$ are both bounded in $L^\infty(0,T;L^2(\O)^3)$ by a constant $N$ not depending on $m$. Hence the weak solution $(\ddd,\bbb)$, found before, satisfies

\begin{equation}
\frac{\pa\ddd}{\pa t}\in L^\infty(0,T;L^2(\O)^3),\quad \frac{\pa\ddd}{\pa t}\in L^\infty(0,T;L^2(\O)^3).
\label{5_31}
\end{equation}
Thus we are permitted to write

\begin{equation}
\ddd(\xxx,0)=\ddd_0(\xxx),\quad\bbb(\xxx,0)=\bbb_0(\xxx).
\label{6_31}
\end{equation}
Integrating by parts with respect to $t$ in (\ref{2_26}), which is now permissible, we have

\begin{equation}
\int_0^T\Bigl[\Bigl(\frac{\pa\ddd}{\pa t},\f\Bigl)-\mm\Bigl(\cr \ \bbb,\f\Bigl)+\ee\Bigl(s(\xxx,t),\f\Bigl)\Bigl]dt=\int_0^T\Bigl(\ggg,\f\Bigl)dt+\Bigl(D_0(\xxx),\f(\xxx,0)\Bigl)
\label{7_31}
\end{equation}

for every
\begin{equation*}
\f(\xxx,t)\in L^2(0,T;H_0(\cr;\O)),\ \frac{\pa\f}{\pa t}\in L^2(0,T;L^2(\O)^3),\ \f(\xxx,0)=0.
\end{equation*}
In the same way, from (\ref{3_26}) we obtain

\begin{equation}
\int_0^T\Bigl[\Bigl(\frac{\pa\bbb}{\pa t},\p\Bigl)+\ee\Bigl(\cr\ \ddd,\p\Bigl)\Bigl]dt=0
\label{2_33}
\end{equation}
for every
\begin{equation*}
\p(\xxx,t)\in L^2(0,T;H(\cr;\O)),\ \frac{\pa\p}{\pa t}\in L^2(0,T;L^2(\O)^3),\ \p(\xxx,0)=0.
\end{equation*}
Thus (\ref{5_14}) and (\ref{6_14}) are satisfied in the sense of distribution. 
\end{proof}

With the degree of regularity on the solution now at our disposal we can also prove easily that the solution is unique. For, we have

\begin{theorem}
If  we assume in Theorem 3.1
\begin{equation}
\ggg=0,\quad \bbb_0=0,\quad \ddd_0=0
\label{u_1}
\end{equation}
the corresponding solution vanishes identically, i.e. the solution of the problem (\ref{aa_11})-(\ref{8_14}) is unique.
\end{theorem}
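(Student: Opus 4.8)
The plan is to combine linearity with the energy identity that already underlies estimate (\ref{1_21}). Since the system (\ref{5_14})--(\ref{6_14}) is linear in $(\ddd,\bbb)$, the difference of any two solutions of (\ref{aa_11})--(\ref{8_14}) with the same data solves (\ref{5_14})--(\ref{8_14}) with $\ggg=0$, $\ddd_0=0$, $\bbb_0=0$; hence it suffices to prove the stated assertion. Crucially, the regularity (\ref{ddd_11})--(\ref{eee_11}) obtained in the proof of Theorem 3.1 is now exactly what permits running the energy argument directly on the weak solution, with no return to the Galerkin approximations.

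First I would note that, since $\ddd\in L^\infty(0,T;H_0(\cr;\O))$ with $\ddd_t\in L^\infty(0,T;L^2(\O)^3)$, and likewise for $\bbb$, a standard lemma (see \cite{DL}) gives $\ddd,\bbb\in C([0,T];L^2(\O)^3)$ together with the absolute continuity of $t\mapsto\pp\ddd(t)\pp^2$ and $t\mapsto\pp\bbb(t)\pp^2$ and the identities $\frac{d}{dt}\pp\ddd(t)\pp^2=2(\ddd_t(t),\ddd(t))$, $\frac{d}{dt}\pp\bbb(t)\pp^2=2(\bbb_t(t),\bbb(t))$ for a.e. $t$. Moreover, since $\cr\ \ddd$ and $\cr\ \bbb$ also belong to $L^\infty(0,T;L^2(\O)^3)$ by (\ref{ddd_11})--(\ref{eee_11}), every term in (\ref{5_14}) and (\ref{6_14}) lies in $L^\infty(0,T;L^2(\O)^3)$, so these equations hold in $L^2(\O)^3$ for a.e. $t\in(0,T)$.

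Then I would take the $L^2(\O)^3$ scalar product of (\ref{5_14}) with $\ee\,\ddd(t)$ and of (\ref{6_14}) with $\mm\,\bbb(t)$, add the two, and invoke the Green formula (\ref{4_5}) --- legitimate because $\ddd(t)\in H_0(\cr;\O)$ for a.e. $t$ --- so that the curl contributions $-\ee\mm(\cr\ \bbb(t),\ddd(t))$ and $\ee\mm(\cr\ \ddd(t),\bbb(t))$ cancel. With $\ggg=0$ this yields, using $|s(\xxx,t)|\le s_0$,
\begin{equation*}
\frac12\frac{d}{dt}\Bigl(\ee\pp\ddd(t)\pp^2+\mm\pp\bbb(t)\pp^2\Bigr)=-\frac{1}{\e^2}\bigl(s(\xxx,t)\ddd(t),\ddd(t)\bigr)\le\frac{s_0}{\e}\Bigl(\ee\pp\ddd(t)\pp^2+\mm\pp\bbb(t)\pp^2\Bigr).
\end{equation*}
Since $\ddd_0=0$ and $\bbb_0=0$ force $\ee\pp\ddd(0)\pp^2+\mm\pp\bbb(0)\pp^2=0$, Gronwall's lemma gives $\ee\pp\ddd(t)\pp^2+\mm\pp\bbb(t)\pp^2=0$ for every $t\in[0,T]$, whence $\ddd\equiv0$ and $\bbb\equiv0$; applied to the difference of two solutions of (\ref{aa_11})--(\ref{8_14}), this is uniqueness.

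The only step I expect to need genuine care is the justification of $\frac{d}{dt}\pp\ddd(t)\pp^2=2(\ddd_t,\ddd)$ and its analogue for $\bbb$ --- equivalently, the admissibility of testing (\ref{5_14})--(\ref{6_14}) with the solution itself. This is precisely where the improved regularity (\ref{ddd_11})--(\ref{eee_11}) is used; once it is in hand, everything reduces to the one-line energy estimate above (the same as (\ref{1_21}) with $s$ bounded) followed by Gronwall, exactly as in the a priori bound of Theorem 3.1.
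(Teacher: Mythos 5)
Your argument is correct and is essentially the paper's own proof: test (\ref{5_14}) with $\ee\ddd$ and (\ref{6_14}) with $\mm\bbb$, cancel the curl terms via (\ref{4_5}), and apply Gronwall to the energy $\ee\pp\ddd(t)\pp^2+\mm\pp\bbb(t)\pp^2$ starting from zero. Your additional remarks (reducing uniqueness to the homogeneous problem by linearity and justifying $\frac{d}{dt}\pp\ddd(t)\pp^2=2(\ddd_t,\ddd)$ via the regularity (\ref{ddd_11})--(\ref{eee_11})) only make explicit what the paper leaves implicit, and your bound without the spurious additive constant is in fact cleaner than the paper's (\ref{u_3}).
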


\begin{proof}
Let us multiply (\ref{5_14}) by $\ee\ddd$ and (\ref{6_14}) by $\mm\bbb$, we obtain, recalling (\ref{4_5}) and taking into account of (\ref{u_1}),

\begin{equation}
\frac{1}{2}\frac{d}{dt}\Bigl(\ee\pp \ddd(t)\pp^2+ \mm\pp\bbb(t)\pp^2\Bigl)=-\ee\Bigl(s(\xxx,t)\ddd(t),\ddd(t)\Bigl).
\label{u_2}
\end{equation}
Estimating the R.H.S we find 

\begin{equation}
\frac{1}{2}\frac{d}{dt}\Bigl(\ee\pp \ddd(t)\pp^2+ \mm\pp\bbb(t)\pp^2\Bigl)\leq C_3\Bigl(\ee\pp \ddd(t)\pp^2+ \mm\pp\bbb(t)\pp^2\Bigl) +C_4.
\label{u_3}
\end{equation}

Since $\ee\pp \ddd(0)\pp^2+ \mm\pp\bbb(0)\pp^2=0$, we conclude, by the Gronwall's Lemma, that $\ddd=\bbb=0$.
\end{proof}

\begin{remark}
If $s(\xxx,t)$ represents an electrical conductivity the physically correct assumption would be $s(\xxx,t)>0$ (see \cite{LL} page 129). However, mathematically this assumption is not needed, in fact the conditions (\ref{cc_11}) and (\ref{ccc_11}) are all what is needed.
\end{remark}

\section{The non-linear problem}

Use will be made in this section of the following ``a priori'' estimate for the heat equation (see \cite{LSU}). If $f(t)\in L^\infty(0,T)$ and $\ttt_0(\xxx)\in H^1_0(\O)$ the weak solution of the following initial-boundary value problem 

\begin{equation}
\ttt_t=\kappa\D\ttt+f(t)\ \hbox{in}\ \O,\ \ttt=0\ \hbox{on}\ \G\times\O,\ \ttt(\xxx,0)=\ttt_0(\xxx) 
\label{3_38}
\end{equation}
belongs to $ H^{1,\infty}(0,T;H^1_0(\O))$ and the following estimate holds

\begin{equation}
\pp\ttt\pp_{H^{1,\infty}(0,T;H^1_0(\O))}\leq K\Bigl(\pp f(t)\pp_{L^\infty(0,T)}+\pp\ttt_0\pp_{ H^1_0(\O)}\Bigl).
\label{2_39}
\end{equation}

 \begin{theorem}
Assume

\begin{equation}
\ggg\in L^2(0,T;L^2(\O)^3),\quad \frac{\pa\ggg}{\pa t}\in  L^2(0,T;L^2(\O)^3)
\label{naa_11}
\end{equation}

\begin{equation}
\ddd_0\in H_0(\cr;\O),\quad \bbb\in H(\cr;\O)
\label{nbb_11}
\end{equation}

\begin{equation} 
\s(\xi,t)\in C^1(Q_T),\quad  |\s(\xi,t)|\leq\s_0,\quad \Bigl\vert\frac{\pa\s(\xi,\xxx)}{\pa \xi}\Bigl\vert\leq\s_1
\label{3_35}
\end{equation}

\begin{equation}
\ttt_0(\xxx)\in H_0^1(\O).
\label{2_35}
\end{equation}

There exists at least one solution of the initial-boundary value problem

\begin{equation}
\ddd\in L^\infty(0,T;H_0(\cr;\O)),\quad \frac{\pa\ddd}{\pa t}\in L^\infty(0,T;L^2(\O)^3)
\label{1_36}
\end{equation}

\begin{equation}
\ddd(\xxx,0)=\ddd_0(\xxx),\quad \bbb(\xxx,0)=\bbb_0(\xxx),\quad \ttt(\xxx,0)=\ttt_0(\xxx)
\label{1_36n}
\end{equation}

\begin{equation}
\bbb\in L^\infty(0,T;H(\cr;\O)),\quad \frac{\pa\bbb}{\pa t}\in L^\infty(0,T;L^2(\O)^3)
\label{3_36}
\end{equation}

\begin{equation}
\ttt\in L^\infty(0,T;H^1_0(\O))
\label{4_36}
\end{equation}

\begin{equation}
\frac{\pa\ddd}{\pa t}-\mm\cr\ \bbb+\s(\ttt,\xxx)\ee\ddd=\ggg
\label{5_36}
\end{equation}

\begin{equation}
\frac{\pa\bbb}{\pa t}+\ee\cr\ \ddd=0
\label{6_36}
\end{equation}

\begin{equation}
\ttt_t=\kappa\D\ttt+\frac{1}{2}\int_\O\Bigl[\ee\ddd^2(t)+\mm\bbb^2(t)\Bigl]dx.
\label{7_36}
\end{equation}
\end{theorem}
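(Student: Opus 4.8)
The plan is to apply Schauder's fixed point theorem with the total electromagnetic energy $E(t)$ as the fixed-point variable. Concretely, I would set $X = C([0,T])$ (or $L^\infty(0,T)$) and define a map $\mathcal{S}\colon X \to X$ as follows. Given $f(t) \in X$, solve the heat problem \eqref{3_38} with source $f$; by the a priori estimate \eqref{2_39} this produces a unique $\ttt \in H^{1,\infty}(0,T;H^1_0(\O))$, and in particular a well-defined temperature field. Since $\ttt(\xxx,t)$ is now known, the function $s(\xxx,t) := \s(\ttt(\xxx,t),\xxx)$ is an admissible coefficient: by \eqref{3_35} it is bounded by $\s_0$, and since $\frac{\pa s}{\pa t} = \frac{\pa\s}{\pa\xi}\,\ttt_t$ with $\ttt_t \in L^\infty(0,T;H^1_0(\O)) \hookrightarrow L^\infty(0,T;L^6(\O))$ and $\frac{\pa\s}{\pa\xi}$ bounded, we get $|\frac{\pa s}{\pa t}| \leq s_1$ in the appropriate sense needed for Theorem 3.1. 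Then Theorem 3.1 gives a unique $(\ddd,\bbb)$ solving \eqref{5_14}--\eqref{8_14} with this $s$, and I would close the loop by setting $\mathcal{S}(f)(t) = \frac{1}{2}\int_\O[\ee\ddd^2(t) + \mm\bbb^2(t)]\,dx$. A fixed point $f = \mathcal{S}(f)$ is exactly a solution of \eqref{5_36}--\eqref{7_36}.

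Next I would check the three hypotheses of Schauder. First, \emph{invariance of a ball}: the energy estimate \eqref{2_21}--\eqref{4_21} (Gronwall) bounds $\ee\pp\ddd(t)\pp^2 + \mm\pp\bbb(t)\pp^2$ by a constant $M$ depending only on the data \emph{and on $\s_0$} — crucially \emph{not} on $f$, since the only way $f$ enters the Maxwell system is through $s$, which is uniformly bounded by $\s_0$ regardless of $f$. Hence $\pp\mathcal{S}(f)\pp_{L^\infty(0,T)} \leq \frac{1}{2}M$ for every $f$, so $\mathcal{S}$ maps all of $X$ into the ball $B_{M/2}$; I would restrict to this closed convex bounded set. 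Second, \emph{compactness}: I need $\mathcal{S}(f)$ to lie in a compact subset of $C([0,T])$, so I would show the family $\{\mathcal{S}(f)\}$ is equicontinuous in $t$. This follows from the higher regularity: $\frac{d}{dt}\int_\O[\ee\ddd^2+\mm\bbb^2]\,dx = 2(\ee\ddd,\ddd_t) + 2(\mm\bbb,\bbb_t)$, and by \eqref{4_31}--\eqref{5_31} the quantities $\pp\ddd_t(t)\pp$, $\pp\bbb_t(t)\pp$ are bounded by a constant $N$ again depending only on the data and $\s_0, \s_1$ (note $\pp\ddd_m'(0)\pp$ in \eqref{gamma} involves only $s_0$, not $f$); so $|\frac{d}{dt}\mathcal{S}(f)(t)| \leq $ const, giving uniform Lipschitz bounds and hence Arzelà--Ascoli compactness. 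Third, \emph{continuity} of $\mathcal{S}$: if $f_n \to f$ in $C([0,T])$, then $\ttt_n \to \ttt$ by \eqref{2_39} (linearity makes this immediate), hence $s_n \to s$ pointwise and boundedly, hence the corresponding $(\ddd_n,\bbb_n)$ converge to $(\ddd,\bbb)$ — I would extract a weak-$*$ limit using the uniform bounds, pass to the limit in the weak formulations \eqref{2_26}--\eqref{3_26} (the term $\ee(s_n\ddd_n,\f)$ converges because $s_n \to s$ strongly and $\ddd_n \rightharpoonup \ddd$ weakly-$*$), identify the limit as the unique solution via Theorem 3.2, conclude convergence of the full sequence, and finally note $\ddd_n \to \ddd$, $\bbb_n \to \bbb$ in $C([0,T];L^2(\O)^3)$ (from the $t$-derivative bounds plus weak-$*$ convergence) so that $\mathcal{S}(f_n)(t) \to \mathcal{S}(f)(t)$ uniformly.

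Once Schauder yields a fixed point $f^*$ with associated $\ttt$ and $(\ddd,\bbb)$, the regularity claims \eqref{1_36}--\eqref{4_36} are inherited directly from Theorem 3.1 and \eqref{2_39}, the Maxwell equations \eqref{5_36}--\eqref{6_36} hold in the distributional sense as established at the end of the proof of Theorem 3.1, and \eqref{7_36} holds because $f^* = \frac{1}{2}\int_\O[\ee\ddd^2 + \mm\bbb^2]\,dx$ by construction and $\ttt$ solves \eqref{3_38} with this source. The initial conditions \eqref{1_36n} are recovered exactly as in Theorem 3.1 (the $t$-derivative regularity makes the traces at $t=0$ meaningful) and from the heat problem.

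The main obstacle I anticipate is the continuity of $\mathcal{S}$, specifically passing to the limit in the \emph{nonlinear} coupling: one must argue that $s_n(\xxx,t) = \s(\ttt_n(\xxx,t),\xxx) \to \s(\ttt(\xxx,t),\xxx)$ in a topology strong enough to handle the product $s_n \ddd_n$ against a test function when only $\ddd_n \rightharpoonup \ddd$ weakly-$*$. Dominated convergence (using $|\s| \le \s_0$ and $\ttt_n \to \ttt$, say in $L^2(Q_T)$ after extracting an a.e.-convergent subsequence) combined with continuity of $\s$ gives $s_n \to s$ strongly in every $L^p(Q_T)$, $p<\infty$, which suffices; but one must be careful that the whole sequence converges, not just a subsequence, which is why invoking the uniqueness Theorem 3.2 for the limiting linear problem is essential. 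A secondary technical point is verifying that the composition $s = \s(\ttt,\xxx)$ genuinely satisfies the $C^1$-in-$t$ hypothesis \eqref{ccc_11} of Theorem 3.1 (or, more honestly, that Theorem 3.1 only really needs the bound $|\pa_t s| \le s_1$ in an integrated/a.e. sense, which the derivation of \eqref{4_31} via Cauchy--Schwarz actually uses) — this may require a short remark reconciling the smoothness assumed in Section 3 with the regularity $\ttt_t \in L^\infty(0,T;H^1_0(\O))$ that is actually available.
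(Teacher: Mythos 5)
Your overall architecture is the paper's: a Schauder fixed point on the scalar energy $E(t)$, with the map built by solving the heat problem \eqref{3_38}, composing to get $\hat s(\xxx,t)=\s(\hat\ttt,\xxx)$, invoking Theorem 3.1, and returning the new energy; invariance of the ball and compactness via the bound on $\hat E'(t)$ are exactly the paper's steps \eqref{2_37} and \eqref{2_42}--\eqref{3_42}. The place where your argument genuinely breaks down is the continuity of the map. You pass to the limit in the weak formulations using $s_n\to s$ strongly and $\ddd_n\rightharpoonup\ddd$, $\bbb_n\rightharpoonup\bbb$ weakly-$*$, identify the limit by uniqueness (Theorem 3.2), and then assert $\ddd_n\to\ddd$, $\bbb_n\to\bbb$ in $C([0,T];L^2(\O)^3)$ ``from the $t$-derivative bounds plus weak-$*$ convergence.'' That step is not valid: uniform Lipschitz-in-time bounds with values in $L^2(\O)^3$ together with weak-$*$ convergence only give, for each fixed $t$, weak convergence $\ddd_n(t)\rightharpoonup\ddd(t)$ in $L^2(\O)^3$; there is no spatial compactness available here ($H(\cr;\O)$ is not compactly embedded in $L^2(\O)^3$ — think of gradients), so no Aubin--Lions or Arzel\`a--Ascoli upgrade to strong convergence is possible. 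Since $\mathcal{S}(f_n)(t)$ is a \emph{quadratic} functional of $(\ddd_n,\bbb_n)$, weak convergence only yields lower semicontinuity of the norms, not convergence of the energies, so $\mathcal{S}(f_n)\to\mathcal{S}(f)$ in $C^0([0,T])$ does not follow from what you have established.

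The paper closes exactly this hole differently: it never uses weak compactness for the continuity step, but subtracts the two Maxwell systems \eqref{4_44}--\eqref{7_44} and \eqref{1_45}--\eqref{4_45}, uses the Lipschitz bound $|\s_j-\bar\s|\leq\s_1C\pp c_j-\bar c\pp_{L^\infty(0,T)}$ coming from \eqref{2_39} and \eqref{3_35}, and derives the differential inequality \eqref{3_47} for $y_j(t)=\frac{1}{2}\bigl(\ee\pp\dd_j\pp^2+\mm\pp\bb_j\pp^2\bigr)$, concluding $y_j\to0$ by a Perron/Gronwall comparison. This gives \emph{strong} $L^\infty(0,T;L^2)$ convergence of the differences, which is what the quadratic energy needs. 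Your proof becomes correct if you replace your weak-limit-plus-uniqueness argument by this direct stability estimate (your own bound \eqref{6_45}-type estimate on $\bar\ddd$ is all that is additionally required). Your closing remark about whether $\hat s=\s(\hat\ttt,\xxx)$ really satisfies the $C^1$/pointwise bound \eqref{ccc_11} given only $\ttt_t\in L^\infty(0,T;H^1_0(\O))$ is a fair observation, but the paper is no more careful on this point, so it is not a gap relative to the paper's own proof.
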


\begin{proof}
We obtain easily an ``a priori'' estimate. Assume $(\ddd,\bbb,\ttt)$ to be a solution of problem (\ref{1_36})-(\ref{7_36}). Let us multiply (\ref{5_36}) by $\ee\ddd$ and (\ref{6_36}) by $\mm\bbb$. Integrating over $\O$ and using (\ref{4_5}), we obtain

\begin{equation}
\frac{1}{2}\frac{d}{dt}\Bigl[\ee\pp \ddd(t)\pp^2+\mm\pp\bbb(t)\pp^2\Bigl]=-\int_\O\s(\ttt,\xxx)\ee\ddd^2(t)dx+\ee\Bigl(\ggg(t),\ddd(t)\Bigl).
\label{1_37}
\end{equation}
Estimating the R.H.S of (\ref{1_37}) with the help of (\ref{3_35}), taking into account of the initial conditions (\ref{1_36n}) and using the Gronwall' s lemma, we conclude that there exits a constant $N$, depending only on the data, such that

\begin{equation}
E(t)=\int_\O\Bigl[\ee\ddd(t)^2+\mm\bbb(t)^2\Bigl]dx\leq N. \footnote{We recall that E(t) gives the total electromagnetic energy in $\O$.}
\label{2_37}
\end{equation}
Define $\kk=\Bigl\{f(t)\in C^0([0,T]),\ |f(t)|\leq N\Bigl\}$. Let $E(t)\in\kk$ and solve the problem

\begin{equation}
\hat\ttt_t=\kappa\D\hat\ttt+E(t),\ \hat\ttt(\xxx,0)=\ttt_0(\xxx),\ \hat\ttt=0\quad\hbox{on}\ \G\times(0,T).
\label{1_40}
\end{equation}
We apply Theorem 3.1 with $\hat s(\xxx,t)=\s(\hat\ttt(\xxx,t),\xxx)$. The condition (\ref{cc_11}) is certainly verified by (\ref{3_35}). Moreover, by (\ref{2_39}) and (\ref{ccc_11}) we have

\begin{equation}
\Bigl|\frac{\pa\hat s}{\pa t}(\xxx,t)\Bigl|\leq\Bigl|\frac{\pa\s}{\pa\xi}(\hat\ttt(\xxx,t),\xxx)\Bigl|\Bigl|\frac{\pa\hat\ttt}{\pa t}(\xxx,t)\Bigl|\leq \s_1N.
\label{5_40}
\end{equation}
This implies existence and uniqueness for the following problem

\begin{equation}
\hat\ddd\in L^\infty(0,T;H_0(\cr;\O)),\quad \frac{\pa\hat\ddd}{\pa t}\in L^\infty(0,T;L^2(\O)^3)
\label{nddd_11}
\end{equation}

\begin{equation}
\hat\bbb\in L^\infty(0,T;H(\cr;\O)),\quad \frac{\pa\hat\bbb}{\pa t}\in L^\infty(0,T;L^2(\O)^3)
\label{neee_11}
\end{equation}

\begin{equation}
\hat\ddd(\xxx,0)=\ddd_0(\xxx)
\label{n7_14}
\end{equation}

\begin{equation}
\hat\bbb(\xxx,0)=\bbb_0(\xxx)
\label{n8_14}
\end{equation}

\begin{equation}
\frac{\pa\hat\ddd}{\pa t}-\mm\cr\ \hat\bbb+\hat s(\xxx,t)\ee\hat\ddd=\ggg
\label{1_41}
\end{equation}

\begin{equation}
\frac{\pa\hat\bbb}{\pa t}+\ee\cr\ \hat\ddd=0
\label{2_41}
\end{equation}

\begin{equation}
\hat\ddd(\xxx,0)=\ddd_0(\xxx).
\label{3_41}
\end{equation}

Let

\begin{equation}
\hat E(t)=\frac{1}{2}\int_\O\Bigl[\ee\hat\ddd^2(t)+\mm\hat\bbb^2(t)\Bigl]dx
\label{5_41}
\end{equation}
and define the operator $ \tt:\kk\to L^\infty(0,T)$ by

\begin{equation}
  \hat E(t)=\tt(E(t)).
\label{6_41}
\end{equation}
We claim that

\begin{equation}
\hat E(t)\in H^{1,\infty}(0,T),\quad \pp\hat E(t)\pp_{C^0([0,T])}\leq N.
\label{7_41}
\end{equation}
Differentiating (\ref{5_41}) we have, by (\ref{1_41}) and (\ref{2_41})

\begin{equation}
\hat E'(t)=\ee\mm\Bigl[\Bigl(\hat\ddd(t),\cr\ \hat\bbb(t)\Bigl)-\Bigl(\hat\bbb(t),\cr\ \hat\ddd(t)\Bigl)\Bigl]-\int_\O\ee\s(\hat\ttt(\xxx,t),\xxx)\hat\ddd(t)\bullet\hat\ddd(t) dx+\ee\Bigl(\hat\ddd(t),\ggg(t)\Bigl).
\label{1_42}
\end{equation}
The first integral in R.H.S vanishes by (\ref{4_5}). Thus

\begin{equation}
\hat E'(t)=-\int_\O\ee\s(\hat\ttt(\xxx,t),\xxx)\hat\ddd(t)\bullet\hat\ddd(t) dx+\ee\Bigl(\hat\ddd(t),\ggg(t)\Bigl).
\label{2_42}
\end{equation}
Using (\ref{3_35}) and (\ref{4_21}) we can estimate the R.H.S. of (\ref{2_42}) and obtain

\begin{equation}
\pp\hat E'(t)\pp_{L^\infty (0,T)}\leq C,
\label{3_42}
\end{equation}
where the constant $C$ depends only on the data. Thus (\ref{7_41}) holds true and $\hat E(t)\in \kk$. Since $H^{1,\infty}(0,T)$ is compactly imbedded in $C^0([0,T])$ we concluded that $\tt$ is a compact operator. To apply the Schauder's fixed point theorem it remains to prove that $\tt$ is continuous. 

Assume $c_j(t)\in\kk$ and
\begin{equation}
 c_j(t)\to\bar c(t)\quad \hbox{in}\quad C^0([0,T]).
\label{1_43}
\end{equation}

Let us define
\begin{equation}
\g_j(t)=\tt(c_j(t)),\quad \bar\g(t)=\tt(\bar c(t)).
\label{2_43}
\end{equation}

We claim that $\g_j(t)\to\bar\g(t)$. Let $\bar\ttt$ and $\ttt_j$ be respectively the solutions of the problems

\begin{equation}
\frac{\pa\bar\ttt}{\pa t}=\kappa\D\bar\ttt+\bar c(t),\quad\bar\ttt(\xxx,0)=\ttt_0(\xxx),\quad \bar\ttt=0\ \hbox{on}\ \G\times(0,T)
\label{3_43}
\end{equation}

\begin{equation}
\frac{\pa\ttt_j}{\pa t}=\kappa\D\ttt_j+ c_j(t),\quad\ttt_j(\xxx,0)=\ttt_0(\xxx),\quad \ttt_j=0\ \hbox{on}\ \G\times(0,T).
\label{4_43}
\end{equation}

By difference from (\ref{3_43}) and (\ref{4_43}), recalling (\ref{2_39}) we have

\begin{equation}
\pp\ttt_j-\bar\ttt\pp_{ H^{1,\infty}(0,T;H^1_0(\O))}\leq C\pp c_j(t)-\bar c(t)\pp_{L^\infty(0,T)}.
\label{1_44}
\end{equation}

Set $\s_j=\s(\ttt_j,\xxx)$ and $\bar\s=\s(\bar\ttt,\xxx)$. By (\ref{3_35}) we obtain

\begin{equation}
\bigl|\s_j-\bar\s\bigl|\leq\s_1\pp\ttt_j-\bar\ttt\pp_{ H^{1,\infty}(0,T;H^1_0(\O))}\leq\s_1 C\pp c_j(t)-\bar c(t)\pp_{L^\infty(0,T)}.
\label{1_44_1}
\end{equation}
Let $(\ddd_j,\bbb_j)$ be the solution of problem

\begin{equation}
\frac{\pa\ddd_j}{\pa t}+\ee\s_j\ddd_j-\mm\cr\ \bbb_j=\ggg
\label{4_44}
\end{equation}

\begin{equation}
\frac{\pa\bbb_j}{\pa t}+\ee \cr\ \ddd_j=0
\label{5_44}
\end{equation}

\begin{equation}
\ddd_j(\xxx,0)=\ddd_0,\quad \bbb_j(\xxx,0)=\bbb_0
\label{7_44}
\end{equation}

and $(\bar \ddd,\bar\bbb)$ the solution of problem

\begin{equation}
\frac{\pa\bar \ddd}{\pa t}+\ee\bar\s\bar\ddd-\mm\cr\ \bar\bbb=\ggg
\label{1_45}
\end{equation}

\begin{equation}
\frac{\pa\bar\bbb}{\pa t}+\ee \cr\ \bar\ddd=0
\label{2_45}
\end{equation}

\begin{equation}
\bar\ddd(\xxx,0)=\ddd,\quad \bar\bbb(\xxx,0)=\bbb_0.
\label{4_45}
\end{equation}

Multiplying (\ref{4_44}) by $\ee\ddd_j$ and (\ref{5_44}) by $\mm\bbb_j$, integrating over $\O$, using (\ref{4_5}) and the Gronwall's inequality we conclude that

\begin{equation}
\pp\ddd_j(t)\pp_{L^\infty(0,T;L^2(\O)^3)}\leq A,\quad \pp\bbb_j(t)\pp_{L^\infty(0,T;L^2(\O)^3)}\leq A,
\label{6_45}
\end{equation}
where the constant $A$ depends only on the data. Let us define $\dd_j=\ddd_j-\bar\ddd$,  $\bb_j=\bbb_j-\bar\bbb$. By difference from (\ref{4_44}), (\ref{1_45}) and (\ref{5_44}), (\ref{2_45}) we have

\begin{equation}
\frac{\pa\dd_j}{\pa t}+\ee\bar\s\dd_j+\bigl(\s_j-\bar\s\bigl)\bar\ddd-\mm\cr\ \ \bb_j=0
\label{1_46}
\end{equation}

\begin{equation}
\frac{\pa\bb_j}{\pa t}+\ee\ \cr\ \dd_j=0.
\label{2_46}
\end{equation}

Proceeding as usual, i.e. multiplying (\ref{1_46}) by $\ee\dd_j$ and (\ref{2_46}) by $\mm\bb_j$, integrating over $\O$ and using (\ref{4_5}) we arrive at

\begin{equation}
\frac{1}{2}\frac{d}{dt}\Bigl(\ee\pp\dd_j\pp^2+\mm\pp\bb_j\pp^2\Bigl)=\ee\int_\O\bar\s\dd_j^2dx-\ee\int_\O\bigl(\s_j-\bar\s\bigl)\bar\ddd\bullet\dd_j dx.
\label{3_46}
\end{equation}
Define

\begin{equation}
y_j(t)=\frac{1}{2}\Bigl(\ee\pp\dd_j\pp^2+\mm\pp\bb_j\pp^2\Bigl),\quad \a_j(t)=\pp c_j(t)-\bar c(t)\pp_{C^0([0,T])}.
\label{2_47}
\end{equation}
From (\ref{2_47}) and (\ref{1_44_1}), estimating the R.H.S of (\ref{3_46}) and recalling (\ref{6_45}) we have the differential inequality

\begin{equation}
y_j'(t)\leq K\Bigl[(1+\a_j(t))y_j(t)+\a_j(t)\Bigl],\quad y_j(0)=0,\ K>0.
\label{3_47}
\end{equation}
We use Perron's theorem \footnote{If $y'\leq g(t,y)$ and $z'=g(t,z)$ with $y(0)=z(0)$ then $y(t)\leq z(t)$.} with the auxiliary problem

\begin{equation}
z_j'(t)= K\Bigl[(1+\a_j(t))z_j(t)+\a_j(t)\Bigl],\quad z_j(0)=0,\ K>0.
\label{1_48}
\end{equation}
Hence $y_j(t)\leq z_j(t)$. Since $\a_j(t)\to0$ for $j\to\infty$, we have 

\begin{equation}
z_j(t)\to w(t)\quad \hbox{as}\ j\to\infty,
\label{1_50}
\end{equation}
where $w(t)=0$ is the solution of

\begin{equation}
w'=Kw,\quad w(0)=0.
\label{2_50}
\end{equation}
 Hence $z_j(t)\to 0$ and $y_j(t)\to 0$. This implies the continuity of the operator $\tt$. By the Schauder's fixed point theorem, problem (\ref{1_36})-(\ref{7_36}) has a solution.
\end{proof}

\vskip.5cm
\noindent {\bf Remark.} If we assume $\s=\s_0>0$ and $\ggg=0$, $\s_0$ a constant, i.e. the case physically more common, the derivative of the total energy $E(t)$ is strictly negative by (\ref{2_42}). One could expect the temperature to tend to zero. This is not always the case as the model adopted does not admit dissipation for the magnetic part of the energy. Consider the following  two-dimensional example. Let $\O=\{(x,y);1<x^2+y^2<2\}$, $\ddd_0(x,y)=0$, $\bbb_0(x,y)=\frac{y}{x^2+y^2}{\bf i}-\frac{x}{x^2+y^2}{\bf j}\in C^\infty(\bar\O)$, $\ttt_0(x,y)=\frac{\pi\log\sqrt{x^2+y^2}}{2\kappa\log 2}-\frac{\pi\sqrt{x^2+y^2}}{2D\mu}+\frac{\pi}{2\kappa\mu}$. By direct computation we find as solution to problem (\ref{1_36})-(\ref{7_36})

\begin{equation*}
\ddd(x,y,t)=0,\ \bbb(x,y,t)=\bbb_0(x,y),\ \ttt(x,y,t)=\frac{\pi\log\sqrt{x^2+y^2}}{2\kappa\log 2}-\frac{\pi\sqrt{x^2+y^2}}{2\kappa\mu}+\frac{\pi}{2\kappa\mu}.
\end{equation*}
The temperature does not tend to zero. Essential in this example is the fact that $\O$ is not simply connected.

\bibliographystyle{amsplain}

\end{document}